\newtheorem{theorem}{Theorem}
\newtheorem{lemma}{Lemma}
\newtheorem{definition}{Definition}
\newtheorem{property}{Property}
\begin{document}

\title{Physical Network Coding in Two--Way Wireless Relay Channels}
\author{\authorblockN{Petar Popovski and Hiroyuki Yomo}\\
\authorblockA{Department of Electronic Systems, Aalborg University\\
Niels Jernes Vej 12, DK-9220 Aalborg, Denmark\\
Email: \{petarp, yomo\}@kom.aau.dk}}

\maketitle

\begin{abstract}
It has recently been recognized that the wireless networks
represent a fertile ground for devising communication modes based
on network coding. A particularly suitable application of the
network coding arises for the two--way relay channels, where two
nodes communicate with each other assisted by using a third, relay
node. Such a scenario enables application of \emph{physical
network coding}, where the network coding is either done (a)
jointly with the channel coding or (b) through physical combining
of the communication flows over the multiple access channel. In
this paper we first group the existing schemes for physical
network coding into two generic schemes, termed 3--step and
2--step scheme, respectively. We investigate the conditions for
maximization of the two--way rate for each individual scheme: (1)
the Decode--and--Forward (DF) 3--step schemes (2) three different
schemes with two steps: Amplify--and--Forward (AF), JDF and
Denoise--and--Forward (DNF). While the DNF scheme has a potential
to offer the best two--way rate, the most interesting result of
the paper is that, for some SNR configurations of the
source---relay links, JDF yields identical maximal two--way rate
as the upper bound on the rate for DNF.
\end{abstract}

\section{Introduction}

It has been recently noted~\cite{ref:Fragouli} that broadcast and
unreliable nature of the wireless medium sets a fertile ground for
developing network--coding~\cite{ref:AhlswedeNC} solutions. The
network coding can offer performance improvement in the wireless
networks for two--way (or multi--way) communication
flows~\cite{ref:Chou_conf} \cite{ref:Ericsson_DF_BAT}
\cite{ref:BAT-VTC} \cite{ref:BAT-ICC} \cite{ref:Hausl}
\cite{ref:Katti} \cite{ref:Rankov}. In general, there are two
generic schemes for two--way wireless relay (Fig.~\ref{fig:scen}):
(a) 3--step scheme (b) 2--step scheme.
\begin{figure}
    \centering
    \includegraphics[width=8 cm]{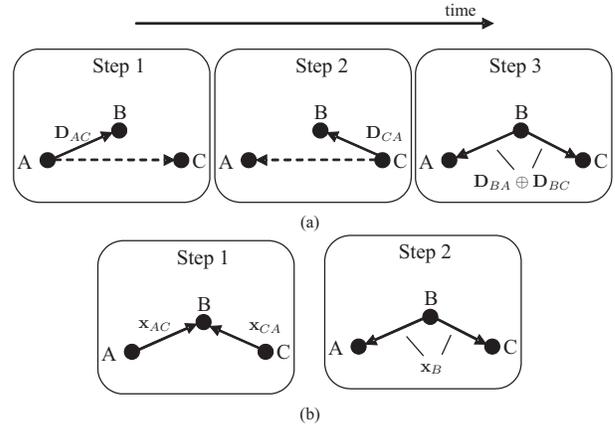}\\
    \caption{Generic schemes for physical network coding over the two--way relay channel. (a)
    Three--step scheme (b) 2--step scheme.
    }\label{fig:scen}
    \vspace{-12pt}
\end{figure}
The node $A$ has packets for the node $C$ and vice versa. In Step
1 of the 3--step scheme, $A$ transmits the packet
$\textbf{D}_{AC}$, in Step 2 $C$ transmits the packet
$\textbf{D}_{CA}$. Here $B$ decodes both packets, such that the
3--step schemes are \emph{Decode--and--Forward (DF)} schemes. In
the simpler DF schemes~\cite{ref:Chou_conf}
\cite{ref:Ericsson_DF_BAT} \cite{ref:BAT-VTC}, the direct link
between $A$ and $C$ is \emph{ignored} by the receivers in Steps 1
and 2, such that in Step 3 $B$ broadcasts the packet
$\textbf{D}_{BC} \oplus \textbf{D}_{BA} = \textbf{D}_{AC} \oplus
\textbf{D}_{CA}$, where $\oplus$ is \verb|XOR| operation, after
which the node $A$ ($C$) is able to decode the packet
$\textbf{D}_{CA} (\textbf{D}_{AC})$. While it is hard to
characterize such a simple DF scheme as ``physical'' network
coding, such an attribute can be attached to the 3--step DF
scheme~\cite{ref:Hausl}, where the direct link $A-C$ is not
ignored in the Steps 1 and 2 and a joint network--channel coding
is needed. In that case, the packet $\textbf{D}_{BA}
(\textbf{D}_{BC})$ is a many--to--one function of the packet
$\textbf{D}_{CA} (\textbf{D}_{AC})$, since $A$ ($C$) already has
some information from the Step 2 (1). In the 2--step schemes the
communication flows are combined through a simultaneous
transmission over a multiple access channel. In Step 1 $B$
receives a noisy signal that consists of interference between the
signals of $A$ and $C$. Due to the half--duplex operation, the
direct link is naturally ignored in the 2--step schemes. The
signal $\textbf{x}_B$ that is broadcasted in Step 2 depends on the
applied 2--step scheme. In \emph{Amplify--and--Forward
(AF)}~\cite{ref:BAT-VTC}, $\textbf{x}_B$ is simply an amplified
version of the signal received by $B$ in step 1. After receiving
$\textbf{x}_B$, the node $A$ ($C$) subtracts its own signal and
decodes the signal sent by $C$ ($A$) in Step 1. The 2--step scheme
termed \emph{Denoise--and--Forward (DNF)} has been introduced
in~\cite{ref:BAT-ICC}. A related scheme appeared
in~\cite{ref:Xiao_nested}. In DNF, the node $B$ again does not
decode the packets sent by $A$ and $C$ in Step 1, but it maps the
received signal to a codeword from a discrete set. Hence, the
signal $\textbf{x}_B$ carries now the information about the set of
codeword pairs $\{ (\textbf{x}_{AC},\textbf{x}_{CA}) \}$ which are
considered by the node $B$ as likely to have been sent in the Step
1. In general, this set can consist of several codeword pairs,
such that $B$ has an ambiguity which information has been sent.
Nevertheless, since $A$ ($C$) knows $\textbf{x}_{AC}
(\textbf{x}_{CA})$, after receiving $\textbf{x}_B$, it will
extract exactly one codeword as a likely one to have been sent by
$C$ ($A$) in Step 1. The final considered 2--step scheme is
\emph{Joint Decode--and Forward (JDF)}, recently considered
in~\cite{ref:Rankov}. In JDF, the transmission rates in Step 1 of
Fig.~\ref{fig:scen}(b) are selected such that $B$ can jointly
decode both $\textbf{x}_{AC}$ and $\textbf{x}_{CA}$, and then use
\verb|XOR| to obtain the signal for broadcast in Step 2.

In this paper we investigate the strategies that can maximize the
overall two--way rate for several 2-- and 3--step schemes for
physical network coding.  We show that the key to maximizing the
two--way rate in the system for the 3--step schemes is the
relation between the durations of Step 1 and Step 2. On the other
hand, we show that the key factor for maximizing the two--way rate
in the 2--step schemes is the choice of the rates  at which $A$
and $C$ transmit in Step 1. Note that we are not providing the
absolute capacities of the two--way relay channel, since we are
putting some operational restrictions to the applied schemes.
Nevertheless, the results give an excellent overview of what can
be achieved by each scheme for physical network coding.
%While
%the DNF scheme has a potential to offer the best two--way rate,
%the most interesting result is that, for some configurations of
%the SNR values on the links $A-B$ and $B-C$, the JDF scheme yields
%identical maximal two--way rate as the DNF scheme.

\section{Notations and Definitions}
We assume that there are only two communication flows,
$A\rightarrow C$ and $C\rightarrow A$, respectively. The relay $B$
is neither a source nor a sink of any data in the system. All the
nodes are half--duplex, such that a node can either transmit or
receive at a given time. We use $x_U[m]$ to denote the $m-$th
complex baseband transmitted symbol from node $U \in \{A, B, C\}$.
A complex--valued vector is denoted by $\mathbf{x}$. A packet of
bits is denoted by $\textbf{D}$, and the number of bits in the
packet is $|\textbf{D}|$. If only one node $U \in \{A, B, C \}$ is
transmitting, then the $m-$th received symbol at the node $V \in
\{A, B, C \} \setminus U$ is given by:
\begin{equation}\label{eq:ReceivedVtoU}
    y_V[m]=h_{UV}x_U[m]+z_V[m]
\end{equation}
where $h_{UV}$ is the complex channel coefficient between $U$ and
$V$. $z_V[m]$ is the complex additive white Gaussian noise ${\cal
CN}(0,N_0)$. The transmitted symbols have $E\{x_U[m]\}=0$ and a
normalized power $E\{|x_U[m]|^2\}=1$. Each node uses the same
transmission power, which makes the links symmetric: {\small
\begin{equation}\label{eq:ChannelGains}
    h_{AC}=h_{CA}=h_0; \quad h_{AB}=h_{BA}=h_1; \quad
    h_{CB}=h_{BC}=h_2
\end{equation}
}We consider time--invariant channels and $h_0, h_1, h_2$ are
perfectly known by all nodes. This assumption allows us to find
the two--way rates at which a reliable communication is possible.
The bandwidth is normalized, such that we consider the following
signal--to--noise ratios (SNRs):
\begin{equation}\label{eq:SNRs}
    \gamma_i=\frac{|h_i|^2}{N_0} \qquad i=0,1,2
\end{equation}
The bandwidth is normalized to 1 Hz, such that a link with SNR of
$\gamma$ can reliably transfer up to:
\begin{equation}\label{eq:Definition of C gamma}
    C(\gamma)=\log_2(1+\gamma)\textrm{ [bit/s]}
\end{equation}
The time is measured in number of symbols, such that when a packet
of $N$ symbols is sent at the data rate $r$, the packet contains
$Nr$ bits. The packet lengths are sufficiently large, such that we
can use codebooks that offer zero errors if the rate is chosen to
be below the channel capacity.

Without loss of generality, we assume that
\begin{equation}\label{eq:RelationSNR1SNR2}
    \gamma_2 \geq \gamma_1
\end{equation}
The source--to--relay links are assumed better than the direct
link~\cite{ref:CoverRelay}:
\begin{equation}\label{eq:Relation to SNR0}
    \gamma_1 > \gamma_0 \qquad \gamma_2 > \gamma_0
\end{equation}
If $A$ and $C$ transmit simultaneously, then $B$ receives:
\begin{equation}\label{eq:MAchannel_at_B}
    y_B[m]=h_1x_A[m]+h_2x_B[m]+z_B[m]
\end{equation}

In this paper we will be interested in the \textbf{two--way rate}:
\begin{definition}
Let, during a time of $N$ symbols, $A$ receive reliably
$|\textbf{D}_{CA}|$ bits from $C$ and $C$ receive reliably
$|\textbf{D}_{AC}|$ bits from $A$. Then the two--way rate is given
by:
\begin{equation}\label{eq:def two-way rate}
    R_{A \leftrightarrow
    C}=\frac{|\textbf{D}_{AC}|+|\textbf{D}_{CA}|}{N}\textrm{
    [bits/s]}
\end{equation}
\end{definition}

We seek to maximize the two--way rate under the following two
operational restrictions. \emph{First}, in each round $A$ and $C$
transmit only fresh data, which is independent of any information
exchange that took part in the previous rounds. \emph{Second}, $B$
is applying potentially suboptimal broadcast strategy, as we have
not explicitly considered the broadcast strategies that achieve
the full capacity region of the Gaussian broadcast
channel~\cite{ref:ThomasCover}. Hence, the obtained two--way rates
are lower bounds on the achievable rates in the two--way relay
systems.

\section{3--Step Scheme}

A single round in a 3--step scheme is (Fig.~\ref{fig:scen}(a)):
\textbf{Step 1:} Node $A$ transmits, nodes $B$ and $C$ receive.
\textbf{Step 2:} Node $C$ transmits, nodes $A$ and $B$ receive.
\textbf{Step 3:} Node $B$ transmits, nodes $A$ and $C$ receive. In
this scheme, $B$ should decode the data transmitted by node $A$
(node $C$) in Step 1 (Step 2). The data transmitted by $C$ in Step
2 is independent of the data received from $A$ in Step 1. The data
transmitted by the node $B$ in Step 3 is a function of the data
that was transmitted by $A$ and $C$ in Step 1 and 2, respectively,
from the \emph{same round}.

We first determine the size of the data broadcasted by $B$. If $A$
is transmitting $K$ symbols at a data rate $C(\gamma_1)$, then $B$
receives reliably the packet $\textbf{D}_{AC}$ of $KC(\gamma_1)$
bits. At the same time, the total amount of information received
at the node $C$ is $KC(\gamma_0)$ bits, where
$C(\gamma_0)<C(\gamma_1)$, due to~(\ref{eq:Relation to SNR0}).
Hence, in the next step the relay needs to transmit at least:
\begin{equation}\label{eq:Bits to be relayed}
    |\textbf{D}_{BC}|=K[C(\gamma_1)-C(\gamma_0)]
\end{equation}
bits to $C$ in order to completely remove the uncertainty at $C$
about the message transmitted by $A$. It is crucial to note that
the node $A$ \emph{knows the content of the packet}
$\textbf{D}_{BC}$. The argument to show this is that, after $B$
receives $\textbf{D}_{AC}$, both $A$ and $B$ have the same
information $\textbf{D}_{AC}$ and no information what has been
received at $C$. Even then, the \emph{random binning}
technique~\cite{ref:ThomasCover} can be used to create the packet
$\textbf{D}_{BC}$, such that $\textbf{D}_{BC}$ is uniquely and in
advance determined for each $\textbf{D}_{AC}$.

Let the node $A$ in Step 1 transmit a packet $\textbf{D}_{AC}$ of
$N(1-\theta)$ symbols at a rate $C(\gamma_1)$, where $0<\theta<1$.
Upon successfully decoding $\textbf{D}_{AC}$, the relay node $B$
prepares $\textbf{D}_{BC}$ that needs to be forwarded to $C$, with
a packet size of:
\begin{equation}\label{eq:PacketSize DBC}
    |\textbf{D}_{BC}|=N(1-\theta)\log_2 \left[
    C(\gamma_1)-C(\gamma_0) \right])\textrm{ [bits]}
\end{equation}
During the next $N\theta$ symbols, in Step 2, the node $C$
transmits $\textbf{D}_{CA}$ at a rate $C(\gamma_2)$, out of which
$B$ creates $\textbf{D}_{BA}$ with:
\begin{equation}\label{eq:PacketSize DBA}
    |\textbf{D}_{BA}|=N\theta \log_2 \left[
    C(\gamma_2)-C(\gamma_0)\right]\textrm{ [bits]}
\end{equation}
It follows from above that  $A$ knows $\textbf{D}_{BC}$ and $C$
knows $\textbf{D}_{BA}$. In addition, the node $A$ does not know
$\textbf{D}_{BA}$, but it knows a priori the size of the packet
$|\textbf{D}_{BA}|$. The same is valid for $C$ and the packet size
$|\textbf{D}_{BC}|$. This is reasonable for the assumed
time--invariant systems with fixed $h_0,h_1,h_2$.

\noindent\begin{theorem} \label{theorem:DF} The maximal two--way
rate for DF is
\begin{equation}\label{eq:MaxRateDF}
    R^{*}_{DF}=C(\gamma_1)\frac{1+\delta[C(\gamma_2)-C(\gamma_1)]}{1+\delta[C(\gamma_2)-C(\gamma_0)]}
\end{equation}
where
$\delta=\frac{[C(\gamma_1)-C(\gamma_0)]}{C(\gamma_1)[C(\gamma_1)+C(\gamma_2)-2C(\gamma_0)]}$.
\end{theorem}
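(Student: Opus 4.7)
The plan is to write the two--way rate as a function of the split parameter $\theta$ and the Step~3 duration $N_3$, identify the minimal admissible $N_3$, and then optimize over $\theta$. Using the setup already in the text, one round delivers $|\textbf{D}_{AC}|=N(1-\theta)C(\gamma_1)$ and $|\textbf{D}_{CA}|=N\theta C(\gamma_2)$ reliably to the two endpoints, while $B$ is left holding residual packets of sizes $|\textbf{D}_{BC}|=N(1-\theta)[C(\gamma_1)-C(\gamma_0)]$ and $|\textbf{D}_{BA}|=N\theta[C(\gamma_2)-C(\gamma_0)]$ that it must broadcast in Step~3.

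I would next dispatch Step~3. Since $A$ already knows $\textbf{D}_{BC}$ and $C$ already knows $\textbf{D}_{BA}$, $B$ pads the shorter of the two packets up to the length of the longer one and broadcasts their bit--wise XOR; each endpoint recovers its unknown packet by XORing with its own side information. As the broadcast must be decoded by the weaker receiver (SNR $\gamma_1\leq\gamma_2$), the minimum Step~3 duration is
\begin{equation*}
N_3=\frac{\max\!\bigl(|\textbf{D}_{BC}|,\,|\textbf{D}_{BA}|\bigr)}{C(\gamma_1)}.
\end{equation*}
The two--way rate $R(\theta)=(|\textbf{D}_{AC}|+|\textbf{D}_{CA}|)/(N+N_3)$ is then a piecewise--smooth function of $\theta\in(0,1)$ with a kink at the balance point
\begin{equation*}
\theta^{*}=\frac{C(\gamma_1)-C(\gamma_0)}{C(\gamma_1)+C(\gamma_2)-2C(\gamma_0)},
\end{equation*}
defined by $(1-\theta^{*})[C(\gamma_1)-C(\gamma_0)]=\theta^{*}[C(\gamma_2)-C(\gamma_0)]$, i.e.\ the value of $\theta$ at which the two residual packets have exactly the same length.

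The heart of the proof is to show that $R(\theta)$ attains its global maximum exactly at $\theta^{*}$. The obstacle is the non--smoothness of $N_3$, which I would handle by treating the two regimes $\theta<\theta^{*}$ and $\theta>\theta^{*}$ separately. On each side $N_3$ is linear in $\theta$, so $R$ is a ratio of affine functions and $dR/d\theta$ has constant sign; a short computation using $\gamma_2\geq\gamma_1$ and $\gamma_1>\gamma_0$ shows that this sign is positive for $\theta<\theta^{*}$ and negative for $\theta>\theta^{*}$, so $\theta^{*}$ is the unique maximizer. Substituting $\theta=\theta^{*}$ gives $N_3/N=\delta[C(\gamma_2)-C(\gamma_0)]$ and, after rearranging, $(1-\theta^{*})C(\gamma_1)+\theta^{*}C(\gamma_2)=C(\gamma_1)\bigl(1+\delta[C(\gamma_2)-C(\gamma_1)]\bigr)$ with $\delta$ precisely as defined in the theorem; dividing these two quantities yields the claimed $R^{*}_{DF}$.
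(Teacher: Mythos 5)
Your proof follows the same overall strategy as the paper's: parameterize the time split by $\theta$, compute the residual packet sizes $|\textbf{D}_{BC}|=N(1-\theta)[C(\gamma_1)-C(\gamma_0)]$ and $|\textbf{D}_{BA}|=N\theta[C(\gamma_2)-C(\gamma_0)]$, XOR them in Step 3 exploiting the side information at $A$ and $C$, and show the rate is increasing below and decreasing above the balance point $\theta^{*}$. The one place you diverge is the Step-3 broadcast when $|\textbf{D}_{BC}|>|\textbf{D}_{BA}|$ (i.e.\ $\theta<\theta^{*}$): you pad the shorter packet and broadcast the full-length XOR at $C(\gamma_1)$, whereas the paper splits $\textbf{D}_{BC}$, XORs only a $|\textbf{D}_{BA}|$-bit prefix, and sends the remaining $|\textbf{D}_{BC}|-|\textbf{D}_{BA}|$ bits to $C$ alone at the higher rate $C(\gamma_2)$. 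Your Step-3 duration $\max(|\textbf{D}_{BC}|,|\textbf{D}_{BA}|)/C(\gamma_1)$ is therefore strictly longer than the paper's $|\textbf{D}_{BA}|/C(\gamma_1)+(|\textbf{D}_{BC}|-|\textbf{D}_{BA}|)/C(\gamma_2)$ whenever $\gamma_2>\gamma_1$ and $\theta<\theta^{*}$. This does not change the answer --- both rate functions are increasing on $(0,\theta^{*}]$ and agree at $\theta^{*}$, where the two residual packets have equal length --- but it leaves a small logical hole in the \emph{maximality} claim: having optimized only over the weaker family of schemes, you have shown that $R^{*}_{DF}$ is achievable, not that a smarter Step-3 strategy (such as the paper's) cannot exceed it at some $\theta<\theta^{*}$. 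To close this you would add the one-line check that the improved Case-1 rate is also monotonically increasing on that interval, which is exactly the paper's Case 1. Your closing algebra at $\theta^{*}$ (the identities $N_3/N=\delta[C(\gamma_2)-C(\gamma_0)]$ and $(1-\theta^{*})C(\gamma_1)+\theta^{*}C(\gamma_2)=C(\gamma_1)\bigl(1+\delta[C(\gamma_2)-C(\gamma_1)]\bigr)$) checks out.
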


\begin{proof}
In Step 3, the node $B$ first compares the packet sizes
$|\textbf{D}_{BC}|$ and $|\textbf{D}_{BA}|$. Two cases can occur:
\subsubsection{Case 1: $|\textbf{D}_{BC}|\geq
|\textbf{D}_{BA}|$}
Using~(\ref{eq:PacketSize DBC})
and~(\ref{eq:PacketSize DBA}), we can translate this condition
into inequality for $\theta$:
\begin{equation}\label{eq:Inequality theta 1}
    0 < \theta \leq \frac{C(\gamma_1)-C(\gamma_0)}{C(\gamma_1)+C(\gamma_2)-2C(\gamma_0)}
\end{equation}
The relay $B$ partitions the packet $\textbf{D}_{BC}$ into
$\textbf{D}^{(1)}_{BC}$ and $\textbf{D}^{(2)}_{BC}$:
\begin{equation}\label{eq:PacketSizes Division DBC}
    |\textbf{D}^{(1)}_{BC}|=|\textbf{D}_{BA}| \qquad |\textbf{D}^{(2)}_{BC}|=|\textbf{D}_{BC}|-|\textbf{D}_{BA}|
\end{equation}
$\textbf{D}^{(1)}_{BC}$ consists of the first $|\textbf{D}_{BA}|$
bits from $\textbf{D}_{BC}$ and $\textbf{D}^{(2)}_{BC}$ consists
of the rest of the bits from $\textbf{D}_{BC}$. Now $B$ creates:
\begin{equation}\label{eq:the XORpacket no paddding}
    \textbf{D}_{B}=\textbf{D}^{(1)}_{BC} \oplus \textbf{D}_{BA}
\end{equation}
where $\oplus$ is bitwise \verb|XOR|. Due to the
condition~(\ref{eq:RelationSNR1SNR2}) and the fact that both $A$
and $C$ need to receive it, the packet $\textbf{D}_{B}$ is
transmitted at the lower rate $C(\gamma_1)$. After receiving
$\textbf{D}_{B}$, the node $A$ extracts the packet
$\textbf{D}_{BA}$ as $\textbf{D}_{BA}=\textbf{D}_{B} \oplus
\textbf{D}^{(1)}_{BC}$. This packet is then used together with the
information that $A$ has received from node $C$ in Step 2 to
decode the packet $\textbf{D}_{CA}$. On the other hand, after
receiving $\textbf{D}_{B}$, the node $C$ extracts
$\textbf{D}^{(1)}_{BC}=\textbf{D}_{B} \oplus \textbf{D}_{BA}$. Now
$B$ transmits the packet $\textbf{D}^{(2)}_{BC}$ to the node $C$
at a higher rate of $C(\gamma_2)$, as $A$ does not need to receive
this information. With $\textbf{D}^{(2)}_{BC}$ and
$\textbf{D}^{(1)}_{BC}$, the node $C$ creates $\textbf{D}_{BC}$,
which is further on used jointly with the information that $C$ has
received in Step 1 to decode the packet $\textbf{D}_{AC}$. The
total duration of the three steps is $N_{1,
DF}(\theta)=N(1-\theta)+N\theta+\frac{|\textbf{D}_{BA}|}{C(\gamma_1)}+\frac{|\textbf{D}_{BC}|-|\textbf{D}_{BA}|}{C(\gamma_2)}$,
resulting in a two--way rate of:
\begin{equation}\label{eq:DataRate_DF_1}
    R_{1,DF}(\theta)=\frac{|\textbf{D}_{AC}|+|\textbf{D}_{CA}|}{N_{1,DF}}\textrm{ [bits/s]}
\end{equation}
where $|\textbf{D}_{BC}|$ and $|\textbf{D}_{BA}|$ are functions of
$\theta$ and are given by~(\ref{eq:PacketSize DBC})
and~(\ref{eq:PacketSize DBA}), respectively. It can be proved that
$R_{1,DF}(\theta)$ is monotonically increasing function of
$\theta$, such that $R_{1,DF}(\theta)$ achieves its maximal value
for the upper limiting value of $\theta$, given
in~(\ref{eq:Inequality theta 1}). By applying
$\theta=\frac{C(\gamma_1)-C(\gamma_0)}{C(\gamma_1)+C(\gamma_2)-2C(\gamma_0)}$
into the terms of~(\ref{eq:DataRate_DF_1}), we obtain the two--way
rate given by~(\ref{eq:MaxRateDF}).

\subsubsection{Case 2: $|\textbf{D}_{BC}|<
|\textbf{D}_{BA}|$} This is the region:
\begin{equation}\label{eq:Inequality theta 2}
   \frac{C(\gamma_1)-C(\gamma_0)}{C(\gamma_1)+C(\gamma_2)-2C(\gamma_0)} < \theta
   \leq 1
\end{equation}
The packet $\textbf{D}_{BC}$ is padded with zeros to obtain the
packet $\textbf{D}^{p}_{BC}$ such that
$|\textbf{D}^{p}_{BC}|=|\textbf{D}_{BA}|$. Since $A$ and $C$ know
the size of $|\textbf{D}_{BC}|$, they also know how many zeros are
used for padding. The node $B$ creates the packet
$\textbf{D}_{B}=\textbf{D}^{p}_{BC}\oplus \textbf{D}_{BA}$. In
Step 3 only the packet $\textbf{D}_{B}$ is broadcasted at a
transmission rate $C(\gamma_1)$. The node $A$ extracts
$\textbf{D}_{BA}$ as $\textbf{D}_{BA}=\textbf{D}^{p}_{BC}\oplus
\textbf{D}_{B}$ and uses the information received in Step 2 to
decode $\textbf{D}_{CA}$. Similarly, $C$ obtains
$\textbf{D}^{p}_{BC}$ from $\textbf{D}_{B}$, removes the padding
zeros and obtains $\textbf{D}_{BC}$, which is then used jointly
with the information from Step 1 to decode the packet
$\textbf{D}_{AC}$.

The total number of symbols is $N_{2,
DF}(\theta)=N(1-\theta)+N\theta+\frac{|\textbf{D}_{AB}|}{C(\gamma_1)}$
and the two--way rate $R_{2,DF}(\theta)$ is again calculated by
using the expression~(\ref{eq:DataRate_DF_1}), by putting $N_{2,
DF}$ instead of $N_{1, DF}$. It can be proved that
$R_{2,DF}(\theta)$ decreases monotonically with $\theta$ and it
reaches maximal value for the minimal $\theta$ in the
region~(\ref{eq:Inequality theta 2}). Hence, the maximal two--way
rate is again given by~(\ref{eq:MaxRateDF}).
\end{proof}

It can be seen that due to the condition~(\ref{eq:Relation to
SNR0}), the two--way rate is $R^{*}_{DF}<C(\gamma_1)$. When
$\gamma_1=\gamma_2$, the obtained capacity expression is identical
to what can be obtained from~\cite{ref:Hausl}. When $A$ and $C$
neglect the transmission over the direct link ($\gamma_0=0$), the
two--way rate achieved by DF is:
\begin{equation}\label{eq:MaxRateDF0}
    R^{0}_{DF}=\frac{2C(\gamma_1)C(\gamma_2)}{C(\gamma_1)+2C(\gamma_2)}
\end{equation}
%If, in addition, $\gamma_1=\gamma_2$, we arrive
%$R^{0}_{DF}=\frac{2C(\gamma_1)}{3}$, which is the well--known
%$33$\% gain~\cite{Chou_TR} offered by the network coding with
%symmetric links to the relay.

\smallskip

\section{2--Step Schemes}

In this section we deal with three schemes: \emph{Amplify--and
Forward (AF), Joint Decode--and--Forward (JDF)} and
\emph{Denoise--and--Forward (DNF)}. The two steps are:
\textbf{Step 1:} Nodes $A$ and $C$ transmit, node $B$ receives.
\textbf{Step 2:} Node $B$ transmits, nodes $A$ and $C$ receive.

The transmission rates for $A$ and $C$ in Step 1 are denoted by
$R_A$ and $R_C$, respectively. As we will see, the choice of $R_A$
and $R_C$ is a feature of each transmission scheme AF, JDF or DNF.
Except for the selection of the rate pair $(R_A,R_C)$ rates,  the
Step 1 is identical for all three schemes, where its duration is
fixed to $N$ symbols and the $m-$th received symbol at node $B$ is
given by~(\ref{eq:MAchannel_at_B}).

\subsection{Amplify--and--Forward (AF)}

After Step 1, the node $B$ amplifies the received signal
$\textbf{y}_B$ for a factor $\beta$ and broadcasts $\textbf{x}_B=
\beta \textbf{y}_B$ to $A$ and $C$. As $\textbf{x}_B$ also
consists of $N$ symbols, the total duration of the two steps is
$2N$. The amplification factor $\beta$ is chosen as:
\begin{equation}\label{eq:AF_beta factor}
    \beta=\sqrt{\frac{1}{|h_1|^2+|h_2|^2+N_0}}
\end{equation}
to make the the average per--symbol transmitted energy at $B$
equal to 1 ($N_0$ is the noise variance). The $m-$th symbol
received by $A$ in Step 2 is: {\setlength\arraycolsep{0pt} \small
\begin{eqnarray}
% \nonumber to remove numbering (before each equation)
& & y_A[m] = \beta h_1 y_B[m] + z_A[m] = \nonumber \\
& & \beta h^2_1 x_A[m]+ \beta h_1 h_2 x_C[m]+\beta h_1 z_B[m]+
z_A[m] \nonumber
\end{eqnarray}}
Since $A$ knows $x_A[m],h_1, h_2$ and $\beta$, it can subtract
$\beta h^2_1 x_A[m]$ from $y_A[m]$ and obtain: { \small
\begin{equation}\label{eq:AF-A-received-signal}
    r_A[m]=\beta h_1 h_2 x_C[m]+\beta h_1 z_B[m]+ z_A[m]
\end{equation}}
which is a Gaussian channel for receiving $x_C[m]$ with SNR:
\begin{equation}\label{eq:AF-SNR-A}
    \gamma^{(AF)}_{C \rightarrow A}=\frac{\beta^2 |h_1|^2|h_2|^2}{(\beta^2 |h_1|^2+1)
    N_0
    }=\frac{\gamma_1 \gamma_2}{2 \gamma_1 + \gamma_2 +1}
\end{equation}
This notation denotes that $\gamma^{(AF)}_{C \rightarrow A}$ is
the SNR that determines the rate $R_C$ at which $C$ can
communicate to $A$. Similarly, we can find the SNR which
determines the rate $R_A$:
\begin{equation}\label{eq:AF-SNR-C}
    \gamma^{(AF)}_{A \rightarrow C}=\frac{\gamma_1 \gamma_2}{\gamma_1 + 2 \gamma_2 +1}
\end{equation}
Hence, the rate pair $(R_A, R_C)$ used in Step 1 should be:
\begin{equation}\label{eq:AF rates A C}
    R_A=C \left( \gamma^{(AF)}_{A \rightarrow C} \right )
    \qquad R_C=C \left( \gamma^{(AF)}_{C \rightarrow A} \right )
\end{equation}
Finally, the two--way rate achieved by the AF scheme is:
\begin{equation}\label{eq:AF overall data rate}
    R_{AF}=\frac{NR_A+NR_C}{2N}=\frac{R_A+R_C}{2}
\end{equation}

\subsection{Joint Decode--and--Forward (JDF)}

Here the at rates $R_A$ and $R_C$ are chosen such that the node
$B$ is able to decode both packets in Step 1. The rate pairs
$(R_A, R_C)$ with such a property should lie inside the convex
region~\cite{ref:ThomasCover} on Fig.~\ref{fig:MArates}. The
sum--rate is maximized if the rate pair $(R_A,R_C)$ lies on the
segment $\overline{L_AL_C}$:
\begin{equation}\label{eq:JDF MAxJointrate}
    R_A+R_C=C(\gamma_1+\gamma_2)
\end{equation}
while $R_A+R_C<C(\gamma_1+\gamma_2)$ in all other points of the
region of achievable rates. The points $L_A$ and $L_C$ are
determined as: {\small \arraycolsep=2pt
\begin{eqnarray}
% \nonumber to remove numbering (before each equation)
R_A(L_A)=C(\gamma_1), R_C(L_A)=C \left( \frac{\gamma_2}{1+\gamma_1} \right) \nonumber \\
R_A(L_C)=C\left ( \frac{\gamma_1}{1+\gamma_2} \right),
R_C(L_C)=C(\gamma_2)
\end{eqnarray}
} For the rate pair at $L_A$, the packet $\textbf{x}_C$ is decoded
first, it is then subtracted from the received signal and then
$\textbf{x}_A$ is decoded. At the point $L_C$, these operations
are reversed. Any other point $L$ on the line $\overline{L_AL_C}$
has rates {\small
\begin{eqnarray}\label{eq:JDFpoint on LaLc}
    R_A(\lambda)= C\left ( \frac{\gamma_1}{1+\gamma_2}
    \right)+\lambda \left( C(\gamma_1)- C\left ( \frac{\gamma_1}{1+\gamma_2}
    \right) \right) \\
    R_C(\lambda)= C(\gamma_2)+\lambda \left( C\left (
    \frac{\gamma_2}{1+\gamma_1} \right) -C(\gamma_2)
     \right)
\end{eqnarray}
} where $0 \leq \lambda \leq 1$ can be the time--sharing
parameter, see~\cite{ref:ThomasCover}.
\begin{figure}
    \centering
    \includegraphics[width=5cm,height=5cm]{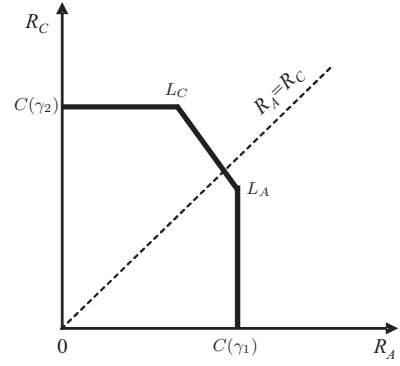}\\
    \caption{The convex hull of the rate pairs $(R_A,
    R_C)$ that are decodable by $B$ in Step 1.
    The dashed line denotes the rate pairs with $R_A=R_C$.
    }\label{fig:MArates}
    \vspace{-12pt}
\end{figure}

\begin{theorem}
\label{theorem:JDF} The maximal two--way rate for the joint
decode--and--forward (JDF) scheme is
\begin{equation}\label{eq:MaxRateJDF}
    R^{*}_{JDF}=
    \left \{
    \begin{array}{ll}
        C(\gamma_1)\frac{2C(\gamma_1+\gamma_2)}{2C(\gamma_1)+C(\gamma_1+\gamma_2)} & \textrm{if $ \gamma_1 \leq \gamma_2 \leq \gamma_1 +\gamma_1^2$}\\
        C(\gamma_1) & \textrm{if $\gamma_2 > \gamma_1 +\gamma_1^2 $}
    \end{array}
    \right.
\end{equation}
\end{theorem}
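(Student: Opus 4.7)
The plan is to parametrize the two-way rate by the Step-1 rate pair $(R_A,R_C)$ in the MAC pentagon of Fig.~\ref{fig:MArates} and optimize. Step~1 occupies $N$ symbols and generates packets of size $|\textbf{D}_{AC}|=NR_A$ and $|\textbf{D}_{CA}|=NR_C$, which $B$ decodes jointly, so the two-way rate equals $(R_A+R_C)/(1+\tau)$ where $N\tau$ is the length of Step~2. I compute $\tau$ with the padding-versus-splitting idea from Theorem~\ref{theorem:DF}: when $R_A\le R_C$, $B$ pads $\textbf{D}_{AC}$ and broadcasts the length-$NR_C$ XOR packet at the bottleneck rate $C(\gamma_1)$, giving $\tau=R_C/C(\gamma_1)$ (splitting gives no gain because the surplus still has to reach $A$); when $R_A>R_C$, $B$ XORs only an $NR_C$-bit prefix of $\textbf{D}_{AC}$ with $\textbf{D}_{CA}$ and broadcasts it at rate $C(\gamma_1)$, then delivers the remaining $N(R_A-R_C)$ bits of $\textbf{D}_{AC}$ to $C$ alone at the higher rate $C(\gamma_2)$, giving $\tau=R_C/C(\gamma_1)+(R_A-R_C)/C(\gamma_2)$.

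The central observation is that the optimum lies on the symmetric diagonal $R_A=R_C$. Restricted to this diagonal the two-way rate collapses to $2RC(\gamma_1)/(C(\gamma_1)+R)$, which is strictly increasing in $R$, so the task reduces to finding the largest symmetric $R$ that still fits inside the pentagon. Using the identity $C(\gamma_1)+C(\gamma_2/(1+\gamma_1))=C(\gamma_1+\gamma_2)$, the diagonal reaches the dominant face $R_A+R_C=C(\gamma_1+\gamma_2)$ at $R=C(\gamma_1+\gamma_2)/2$ exactly when the vertex $L_A$ lies on or below the diagonal, i.e.\ $C(\gamma_1)\ge C(\gamma_2/(1+\gamma_1))$, which rearranges to $\gamma_2\le\gamma_1+\gamma_1^2$; otherwise the diagonal exits through the single-user face $R_A=C(\gamma_1)$ at $R=C(\gamma_1)$. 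Substituting these two symmetric optima into $2RC(\gamma_1)/(C(\gamma_1)+R)$ directly yields the two branches of~(\ref{eq:MaxRateJDF}).

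To rule out asymmetric optima I use both formulas for $\tau$. On the side $R_A\le R_C$ the expression $C(\gamma_1)(R_A+R_C)/(C(\gamma_1)+R_C)$ is strictly increasing in $R_A$ at fixed $R_C$, so any maximizer is pushed to the diagonal or to the edge $R_A=C(\gamma_1)$; along that edge the identity above forces a numerator-denominator cancellation that keeps the rate at $C(\gamma_1)$, which does not beat the symmetric value. On the side $R_A>R_C$ a perturbation $R_A=R+\epsilon$, $R_C=R-\epsilon$ gives $\partial R_{JDF}/\partial\epsilon|_{\epsilon=0}$ with the sign of $1/C(\gamma_1)-2/C(\gamma_2)$, so even when an off-diagonal direction locally improves the rate it only drives the solution to the constraint $R_A=C(\gamma_1)$, where $R_{JDF}$ again collapses to $C(\gamma_1)$. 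Hence the symmetric point is globally optimal.

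The main obstacle is precisely this last step: $\tau$ has two different formulas on the two sides of the diagonal and the off-diagonal derivative can change sign depending on whether $C(\gamma_2)$ exceeds $2C(\gamma_1)$, so the argument does not reduce to a one-variable problem on the dominant face. The clean way out is to exploit the MAC identity $C(\gamma_1)+C(\gamma_2/(1+\gamma_1))=C(\gamma_1+\gamma_2)$ at the vertex $L_A$, which makes the boundary value of $R_{JDF}$ there collapse to exactly $C(\gamma_1)$ and thereby matches the Case~2 branch of~(\ref{eq:MaxRateJDF}); without that identity the two-branch answer would not telescope.
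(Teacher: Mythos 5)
Your argument is correct and ends up using the same relay strategies as the paper (zero--padding and a single XOR broadcast at rate $C(\gamma_1)$ when $R_A\le R_C$; a prefix--XOR at $C(\gamma_1)$ plus a private tail to $C$ at $C(\gamma_2)$ when $R_A>R_C$), but it organizes the two--dimensional optimization over $(R_A,R_C)$ differently. The paper first asserts, without proof, that the optimum lies on the dominant face $\overline{L_AL_C}$ where $R_A+R_C=C(\gamma_1+\gamma_2)$ and then sweeps the time--sharing parameter $\lambda$, splitting at $\lambda_0$; you instead restrict to the diagonal $R_A=R_C$, note that there the rate $2RC(\gamma_1)/(C(\gamma_1)+R)$ is increasing in $R$, and identify the exit point of the diagonal from the pentagon, which is exactly where the threshold $\gamma_2=\gamma_1+\gamma_1^2$ comes from via the identity $C(\gamma_1)+C(\gamma_2/(1+\gamma_1))=C(\gamma_1+\gamma_2)$. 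Your route buys an explicit global argument over the whole achievable region rather than only over $\overline{L_AL_C}$, and your sign computation $1/C(\gamma_1)-2/C(\gamma_2)$ for the off--diagonal perturbation actually clarifies the paper's monotonicity claim (in Case 1 one has $C(\gamma_2)\le 2C(\gamma_1)$, so $N_{2,JDF}(\lambda)$ is increasing, not decreasing, in $\lambda$, which is what makes $\lambda_0$ optimal). Two points to tighten: when you push $R_A$ upward at fixed $R_C$ on the side $R_A\le R_C$, the maximizer can also land on the dominant face before reaching the diagonal or the edge $R_A=C(\gamma_1)$; there the rate equals $C(\gamma_1)C(\gamma_1+\gamma_2)/(C(\gamma_1)+R_C)$ and is maximized by sliding toward the diagonal (Case 1) or toward $L_A$ (Case 2), so the conclusion stands but this case should be stated. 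Also, on the edge $R_A=C(\gamma_1)$ with $R_C<C(\gamma_1)$, reached from the $R_A>R_C$ side, the rate does not collapse to $C(\gamma_1)$ but is strictly below it because of the residual term $(C(\gamma_1)-R_C)/C(\gamma_2)$ in the denominator; the bound $R_{JDF}\le C(\gamma_1)$ still holds there, which is all your argument needs.
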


\begin{proof}
The starting point is the fact that the line segment
$\overline{L_AL_C}$ contains at least one rate pair $(R_A, R_C)$
that maximizes the two--way rate. We omit this proof as it can be
done in a similar way as the part of the proof that follows. We
consider two different cases, one for each region of $\gamma_2$.

\subsubsection{Case $ \gamma_1 \leq \gamma_2 \leq \gamma_1
+\gamma_1^2$} In this region of values for $\gamma_1,\gamma_2$
there is a value $\lambda_0$, such that:
\begin{equation}\label{eq:JDF condition lamda0}
    R_A(\lambda_0)=R_C(\lambda_0)
\end{equation}
i. e. the dashed line on Fig.~\ref{fig:MArates} intersects with
the segment $\overline{L_AL_C}$. The value of $\lambda_0$ is
determined as:
\begin{equation}\label{eq:JDF lamda0}
    \lambda_0=\frac{2C(\gamma_2)-C(\gamma_1+\gamma_2)}{2C(\gamma_1)+2C(\gamma_2)-2C(\gamma_1+\gamma_2)}
\end{equation}
There are two subcases:

\noindent \underline{Subcase} $\lambda \leq \lambda_0$. Here
$R_C(\lambda)>R_A(\lambda)$ and the packet $\textbf{D}_{CA}$ sent
by node $C$ contains more bits than the packet $\textbf{D}_{AC}$.
After decoding both packets, the node $B$ pads the packet
$\textbf{D}_{AC}$ with zeros to obtain $\textbf{D}^{p}_{AC}$ with
$|\textbf{D}^{p}_{AC}|=|\textbf{D}_{CA}|$ and creates:
\begin{equation}\label{eq:JDF XORpacket}
    \textbf{D}_{B}=\textbf{D}^{p}_{AC} \oplus \textbf{D}_{CA}
\end{equation}
Note again that the nodes $A$ and $C$ know a priori how many
padding zeros are used. Since $\gamma_1 \leq \gamma_2$, in Step 2
of the JDF scheme the node $B$ broadcasts $\textbf{D}_{B}$ at a
rate $C(\gamma_1)$. After receiving $\textbf{D}_{B}$, the node $A$
obtains $\textbf{D}_{CA}=\textbf{D}^{p}_{AC} \oplus
\textbf{D}_{B}$ and the node $C$ obtains
$\textbf{D}^{p}_{AC}=\textbf{D}_{CA} \oplus \textbf{D}_{B}$ and
hence obtains $\textbf{D}_{AC}$. The total number of symbols used
in the two steps is $N_{1,
JDF}(\lambda)=N+N\frac{R_C(\lambda)}{C(\gamma_1)}$, such that the
two--way rate is:
\begin{equation}\label{eq:JDF_case1_tw-way rate}
    R_{1,JDF}(\lambda)=\frac{NR_A(\lambda)+NR_C(\lambda)}{N+N\frac{R_C(\lambda)}{C(\gamma_1)}}=C(\gamma_1)\frac{C(\gamma_1+\gamma_2)}{C(\gamma_1)+R_C(\lambda)}
\end{equation}
since~(\ref{eq:JDF MAxJointrate}) holds for each $\lambda$. As
$R_C(\lambda)$ decreases with $\lambda$, the value
$R_{1,DF}(\lambda)$ is maximized for $\lambda=\lambda_0$, where
$\lambda_0$ is given by~(\ref{eq:JDF lamda0}), such that
$R_{1,DF}(\lambda_0)=C(\gamma_1)\frac{2C(\gamma_1+\gamma_2)}{2C(\gamma_1)+C(\gamma_1+\gamma_2)}$.

\noindent \underline{Subcase} $\lambda > \lambda_0$. Here
$R_A(\lambda)>R_C(\lambda)$ and hence
$|\textbf{D}_{AC}|>|\textbf{D}_{CA}|$. The proof uses similar line
of argument as in case 1 of the proof of theorem~\ref{theorem:DF}
and therefore we briefly sketch it. The first part of the packet
$\textbf{D}_{AC}$ is \verb|XOR|--ed with the packet
$\textbf{D}_{CA}$ and the resulting packet is broadcasted at rate
$C(\gamma_1)$. Then, the rest of the packet $\textbf{D}_{AC}$  is
broadcasted at a higher rate $C(\gamma_2)$. The total number of
symbols in the two steps is:
\begin{equation}\label{eq:JDF_subcase2_number of time steps}
    N_{2, JDF}(\lambda)=N+N\frac{R_C(\lambda)}{C(\gamma_1)}+N\frac{R_A(\lambda)-R_C(\lambda)}{C(\gamma_2)}
\end{equation}
This leads to two--way rate of
\begin{equation}\label{eq:JDF_subcase2_tw-way rate}
    R_{2,JDF}(\lambda)=\frac{NC(\gamma_1+\gamma_2)}{N_{2, JDF}(\lambda)}
\end{equation}
It can be shown that $N_{2, JDF}(\lambda)$ is monotonically
decreasing with $\lambda$, while
$R_{2,JDF}(\lambda_0)=R_{1,JDF}(\lambda_0)$, which proves that the
maximal rate is achieved at $\lambda=\lambda_0$.

\subsubsection{Case $ \gamma_2 > \gamma_1
+\gamma_1^2$}. In this case for any $\lambda$, $0 \leq \lambda
\leq 1$ it holds that $R_C(\lambda)>R_A(\lambda)$. Hence, we can
use the transmission method for the subcase $\lambda \leq
\lambda_0$, discussed above. The obtained two--way rate is again
given by~(\ref{eq:JDF_case1_tw-way rate}), which is monotonically
increasing with $\lambda$ and attains the maximum for $\lambda=1$.
Hence, the maximal two--way rate is:
\begin{equation}\label{eq:JDF_case2_tw-way rate}
    R_{1,JDF}(\lambda=1)=C(\gamma_1)\frac{C(\gamma_1+\gamma_2)}{C(\gamma_1)+R_C(\lambda=1)}=C(\gamma_1)
\end{equation}
It can be shown that there are other pairs $R_C, R_A$ that achieve
the maximal two--way rate. Those pairs lie on the segment
$\overline{L_AL_E}$, where $L_E$ is the point where
$R_A=R_C=C(\gamma_1)$.
\end{proof}
Note that $R^{*}_{JDF}< C(\gamma_1)$ when $\gamma_2 < \gamma_1
+\gamma_1^2$.

\subsection{Denoise--and--forward (DNF)}

In the first step of this scheme, the nodes $A$ and $C$ transmit
the packets $\textbf{x}_A$ and $\textbf{x}_C$ at rates $R_A$ and
$R_C$ but we \emph{do not require that the node $B$ is able to
decode} the packets $\textbf{x}_A$ and $\textbf{x}_C$. During the
$N$ symbols of Step1, $B$ receives the $N-$dimensional complex
vector $\textbf{y}_B$, where the $m-$th symbol of $\textbf{y}_B$
is given by~(\ref{eq:MAchannel_at_B}). If the selected rate pair
$(R_A, R_C)$ is not achievable for the multiple access channel
(i.~e. lies outside the convex region on Fig.~\ref{fig:MArates}),
then $B$ cannot find unique pair of codewords $(\textbf{x}_A,
\textbf{x}_C)$, such that the triplet $(\textbf{x}_A,
\textbf{x}_C, \textbf{y}_B)$ is \emph{jointly typical}. The
concept of joint typicality is rather a standard one in
information theory and the reader is referred
to~\cite{ref:ThomasCover} for precise definition. For our
discussion it is sufficient to say that $(\textbf{x}_A,
\textbf{x}_C, \textbf{y}_B)$ is jointly typical when the codeword
$(\textbf{x}_A, \textbf{x}_C)$ is likely to produce $\textbf{y}_B$
at $B$. When the pair $(R_A, R_C)$ is not achievable over the
multiple--access channel, then, upon observing $\textbf{y}_B$, the
node $B$ has a set of codeword pairs ${\cal J}(\textbf{y}_B)$ such
that:
\begin{equation}\label{eq:Jointly typical set}
    {\cal J}(\textbf{y}_B)=\{(\textbf{x}_A, \textbf{x}_C) |
(\textbf{x}_A, \textbf{x}_C, \textbf{y}_B)\textrm{ is jointly
typical}\}
\end{equation}

\begin{lemma}
Let $\textbf{y}_B$ be a typical sequence. Let $(\textbf{x}^{1}_A,
\textbf{x}^{1}_C)$ and $(\textbf{x}^{2}_A, \textbf{x}^{2}_C)$ be
two distinct codeword pairs in ${\cal J}(\textbf{y}_B)$. If $R_A
\leq C(\gamma_1)$ and $R_C \leq C(\gamma_2)$, then $A$ and $C$ can
always select the codebooks such that
\begin{equation}\label{eq:DNF-lemma 1}
    \textbf{x}^{1}_A \neq \textbf{x}^{2}_A\textrm{ and
    }\textbf{x}^{1}_C \neq \textbf{x}^{2}_C
\end{equation}
\end{lemma}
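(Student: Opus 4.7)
The plan is to choose $\mathcal{C}_A$ and $\mathcal{C}_C$ as independent good codes for their individual single-user Gaussian channels from $A$ to $B$ (of SNR $\gamma_1$) and from $C$ to $B$ (of SNR $\gamma_2$). By Shannon's direct theorem, such codes exist at rates $R_A \leq C(\gamma_1)$ and $R_C \leq C(\gamma_2)$, which are precisely the hypotheses of the lemma. The relevant property of such a code is a sphere-packing separation of its codewords: any vector in the $N$-dimensional complex signal space lies within the noise-typical radius of at most one scaled codeword $h_1 \textbf{x}_A$ (respectively $h_2 \textbf{x}_C$).

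The key observation is that joint typicality of the triple $(\textbf{x}_A, \textbf{x}_C, \textbf{y}_B)$ under the MAC~(\ref{eq:MAchannel_at_B}) becomes, after a deterministic shift, single-user joint typicality. Explicitly, the triple being in $\mathcal{J}(\textbf{y}_B)$ forces $\|(\textbf{y}_B - h_2\textbf{x}_C) - h_1\textbf{x}_A\|^2 \approx N N_0$; that is, $h_1 \textbf{x}_A$ is within the noise radius of the target $\textbf{y}_B - h_2 \textbf{x}_C$. By the sphere-packing property of $\mathcal{C}_A$, for any fixed $\textbf{x}_C$ and any typical $\textbf{y}_B$, at most one $\textbf{x}_A \in \mathcal{C}_A$ meets this condition, so no two distinct pairs in $\mathcal{J}(\textbf{y}_B)$ can share their $C$-component. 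A symmetric argument --- using the sphere-packing of $\mathcal{C}_C$ for the channel from $C$ to $B$ at rate $R_C \leq C(\gamma_2)$ --- rules out shared $A$-components, and together they establish~(\ref{eq:DNF-lemma 1}).

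The main obstacle I anticipate is making the ``sphere packing'' heuristic precise within the joint-typicality framework of~(\ref{eq:Jointly typical set}), because an ordinary Shannon-capacity code only guarantees vanishing \emph{average} error probability over random messages, not a uniform per-target-point bound. A rigorous treatment would either invoke the strong-typicality variant (for which the noise ball becomes essentially deterministic in the limit) or employ a minimum-distance code obtained by expurgating a random Gaussian codebook, so that any two distinct codewords of $\mathcal{C}_A$ (respectively $\mathcal{C}_C$) are separated by strictly more than twice the corresponding noise radius. Under that stronger structural assumption, which is still compatible with the rate bounds $R_A \leq C(\gamma_1)$ and $R_C \leq C(\gamma_2)$, the shift-and-decode argument above goes through cleanly.
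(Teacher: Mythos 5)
Your argument is essentially the paper's own: the paper proves the lemma by contradiction via a genie that reveals $\textbf{x}_C$ to $B$, reducing the problem to single-user decoding of $A$'s codeword at rate $R_A \leq C(\gamma_1)$, which is exactly your subtract-$h_2\textbf{x}_C$-and-sphere-pack reduction phrased operationally rather than geometrically. Your closing caveat about average versus per-point error (and the need for expurgation or strong typicality to make "can always select the codebooks" rigorous) is a fair point, but it applies equally to the paper's one-paragraph proof, which glosses over the same issue.
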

\begin{proof}
If $B$ knows packet of $C$, then $A$ can transmit to $B$ reliably
up to the rate $C(\gamma_1)$. We prove the lemma by contradiction.
Let us assume that the contrary is true: $\textbf{x}^{1}_A \neq
\textbf{x}^{2}_A$ and $\textbf{x}^{1}_C = \textbf{x}^{2}_C$. Now,
assume that, after receiving $\textbf{y}_B$, the node $B$ is told
by a genie--helper which is the codeword $\textbf{x}^{1}_C$. Then,
$B$ would still have ambiguity whether $A$ has sent
$\textbf{x}^{1}_A$ or $\textbf{x}^{2}_A$. But that contradicts the
fact that $A$ can communicate reliably to $B$ at a rate $\leq
C(\gamma_1)$ if $\textbf{x}_C$ is known a priori to $B$.
\end{proof}

From this lemma it follows that, if in Step 2 $B$ manages to send
the exact value $\textbf{y}_B$ (with no additional noise) to $A$
and $C$, then $A$ ($C$) will be able to retrieve the packet sent
by $C$ ($A$) in Step 1. In the DNF scheme the node $B$ maps
$\textbf{y}_B$ to a discrete set of codewords and, in Step 2 it
broadcasts the codeword to which $\textbf{y}_B$ is mapped. Such a
mapping to discrete codewords is referred to as \emph{denoising}.
Let ${\cal Y}_B$ denote the set of typical sequences
$\textbf{y}_B$, each of size $N$. Let ${\cal A}$ be a \emph{set of
denoising codewords} $\{w_B(1), w_B(2), \ldots w_B(|{\cal A}|)
\}$, where $|{\cal A}|$ is the cardinality of the set. The
denoising is defined through the following mapping:
\begin{equation}\label{eq:DNF denoise mapping}
    {\cal D}: \quad {\cal Y}_B \mapsto {\cal A}
\end{equation}
The codewords in ${\cal A}$ are random i.~e. selected in a manner
that achieves the capacity of the associated Gaussian channel.
Upon observing $\textbf{y}_B$ in Step 1, in Step 2 the node $B$
broadcasts the codeword ${\cal D}(\textbf{y}_B)$. The mapping
${\cal D}$ should have the following property:
\begin{property}
Given the codeword ${\cal D}(\textbf{y}_B)$ and with known
codeword $\textbf{x}_A$ ($\textbf{x}_C$), the other codeword
$\textbf{x}_C$ ($\textbf{x}_A$) can be retrieved unambiguously.
\end{property}

Such a property enables $A$ and $C$ to successfully decode each
other's packets after Step 2. The important question is: For given
$(R_A, R_C)$ from Step 1, what should be the minimal size $|{\cal
A}|$, such that Property 1 is satisfied? Assume that $R_C>R_A$,
then there are $2^{NR_C}$ possible codewords that $C$ can send in
Step 1 vs. $2^{NR_A}<2^{NR_C}$ sent by $A$. Clearly, the
cardinality should be at least $|{\cal A}| \leq 2^{NR_C}$, because
otherwise it is impossible for $A$ to reconstruct the codeword
sent by $C$. In this paper we conjecture, without proof, that it
is always possible to design the denoising by using a set of
minimal possible cardinality that can satisfy the Property 1:
\begin{equation}\label{eq:DNF central theorem}
    |{\cal A}|=\max(2^{NR_A},2^{NR_C})
\end{equation}
Such a choice is guaranteed to offer an upper bound on the
two--way rate of DNF and is equal to the achievable rate of DNF if
the conjecture is valid.
\begin{theorem}
The upper bound on the two--way rate for denoise--and--forward
(DNF) is
\begin{equation}\label{eq:MaxRateDNF}
    R^{*}_{DNF}=C(\gamma_1)
\end{equation}
where $\gamma_1$ is the SNR of the weaker link to the relay.
\end{theorem}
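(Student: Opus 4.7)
The plan is to combine the constraints from the lemma with the denoising cardinality $|\mathcal{A}|=\max(2^{NR_A},2^{NR_C})$ to write down an explicit expression for the two--way rate, and then show by a short optimization that this expression is bounded above by $C(\gamma_1)$.

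First I would fix the setup. By the lemma, the rate pair used in Step 1 must satisfy $R_A \leq C(\gamma_1)$ and $R_C \leq C(\gamma_2)$, for otherwise $A$ (or $C$) cannot extract its partner's codeword even when given $\mathbf{y}_B$ noiselessly. In Step 2 the relay broadcasts one codeword from $\mathcal{A}$; since the broadcast must be decodable by both $A$ (over the link with SNR $\gamma_1$) and $C$ (over the link with SNR $\gamma_2$), and because $\gamma_1 \leq \gamma_2$, the transmission rate in Step 2 is at most $C(\gamma_1)$. Using the conjectured cardinality $|\mathcal{A}|=\max(2^{NR_A},2^{NR_C})$, the denoised packet carries $N\max(R_A,R_C)$ bits, so Step 2 requires at least $N\max(R_A,R_C)/C(\gamma_1)$ symbols.

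Next I would put everything together. The total duration is $N_{DNF} = N + N\max(R_A,R_C)/C(\gamma_1)$, and the two--way rate is
\begin{equation}
R_{DNF}(R_A,R_C) \;=\; \frac{N(R_A+R_C)}{N_{DNF}} \;=\; \frac{(R_A+R_C)\,C(\gamma_1)}{C(\gamma_1)+\max(R_A,R_C)}.
\end{equation}
Without loss of generality assume $R_C \geq R_A$ (the other case is symmetric in the computation). Then, because $R_A \leq C(\gamma_1)$,
\begin{equation}
R_{DNF} \;=\; \frac{(R_A+R_C)\,C(\gamma_1)}{C(\gamma_1)+R_C} \;\leq\; \frac{(C(\gamma_1)+R_C)\,C(\gamma_1)}{C(\gamma_1)+R_C} \;=\; C(\gamma_1),
\end{equation}
with equality attained in the limit $R_A \to C(\gamma_1)$ (any feasible $R_C$ works). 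In the complementary case $R_A > R_C$ the same bound applies after observing that $R_C < R_A \leq C(\gamma_1)$, so the constraint $R_C \leq C(\gamma_2)$ is not binding. This yields the upper bound in \eqref{eq:MaxRateDNF}.

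I do not expect any real obstacle here, since the lemma already delivers the crucial per--user rate constraints and the broadcast bottleneck is dictated by the weaker link $\gamma_1$. The only subtlety is to justify using $C(\gamma_1)$ (rather than the broadcast capacity region) as the Step 2 rate, which is consistent with the operational restriction declared in Section II that $B$ employs a potentially suboptimal broadcast strategy; and to emphasize that the result is an \emph{upper} bound precisely because the denoising cardinality \eqref{eq:DNF central theorem} is itself only a conjectured lower bound on $|\mathcal{A}|$.
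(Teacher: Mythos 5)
Your proof is correct and follows essentially the same route as the paper: both rely on the lemma's constraint $R_A \leq C(\gamma_1)$, the conjectured cardinality $|{\cal A}|=\max(2^{NR_A},2^{NR_C})$, and the fact that the Step 2 broadcast is bottlenecked at rate $C(\gamma_1)$, giving the duration $N+N\max(R_A,R_C)/C(\gamma_1)$. If anything, your version is slightly more complete, since you explicitly optimize the resulting expression over all feasible rate pairs, whereas the paper simply evaluates it at $R_A=C(\gamma_1)$, $R_C=C(\gamma)$ with $\gamma_1\leq\gamma\leq\gamma_2$ and observes that the value is $C(\gamma_1)$.
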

\begin{proof}
The rate $R_A=C(\gamma_1)$ is maximal possible, while the rate
$R_C=C(\gamma)$, where $\gamma_1 \leq \gamma \leq \gamma_2$. After
the Step 1, the node $B$ maps the received sequence $\textbf{y}_B$
according to the denoising to ${\cal D}(\textbf{y}_B)$. As there
are $|{\cal A}|=2^{NR_C}$ denoising codewords, each one is
represented by $NR_C$ bits. Since both $A$ and $C$ need to receive
it, the codeword ${\cal D}(\textbf{y}_B)$ needs to be sent at a
rate $C(\gamma_1)$. The total duration of the two steps is
$N_{DNF}=N+N\frac{C(\gamma)}{C(\gamma_1)}$ which makes the
two--way rate:
\begin{equation}\label{eq:R_DNF}
    R^{*}_{DNF}=\frac{NC(\gamma_1)+NC(\gamma)}{N+N\frac{C(\gamma)}{C(\gamma_1)}}=C(\gamma_1)
\end{equation}
\end{proof}
This result implies that the node $C$ does not need to ``fully
load'' the channel by setting $R_C=C(\gamma_2)$ and any value of
$R_C \geq C(\gamma_1)$ will result in the maximal two--way rate.
Hence, the higher transmission rate $R_C$ does not improve the
two--way rate, as it accumulates more data at $B$ which needs to
be broadcasted at a low rate in Step 2. Finally, while the JDF
scheme achieves a two--way rate of $C(\gamma_1)$ only when
$\gamma_2 \geq \gamma_1 +\gamma_1^2$, the DNF scheme achieves it
even for $\gamma_2=\gamma_1$.

\section{Numerical Illustration}

\begin{figure}
    \centering
    \includegraphics[width=8.3 cm]{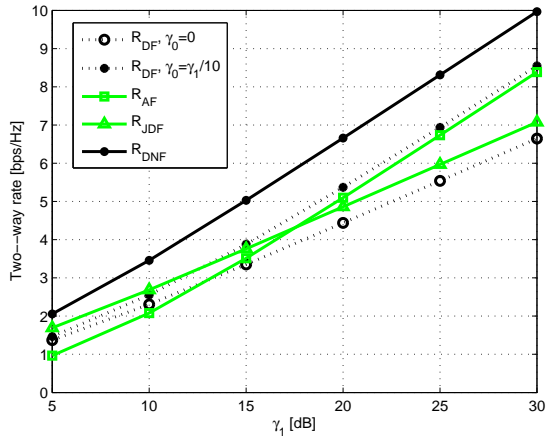}\\
    \caption{Maximal two--way rate for the different schemes with
    $\gamma_2=\gamma_1$.
    }\label{fig:g1_eq_g2}
    \vspace{- 12pt}
\end{figure}
\begin{figure}
    \centering
    \includegraphics[width=8.3 cm]{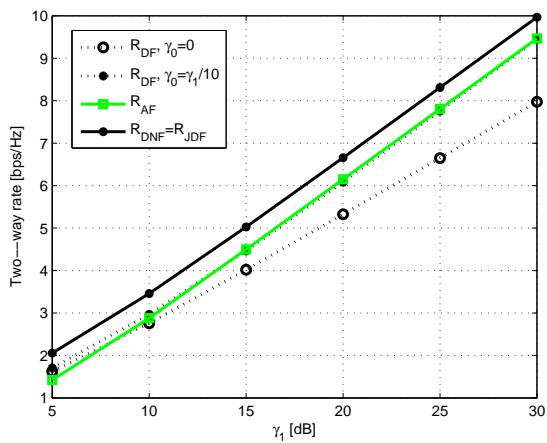}\\
    \caption{Maximal two--way rate for the different schemes with $\gamma_2=\gamma_1+\gamma_1^2$
    }\label{fig:g2_lt_g1}
    \vspace{- 12pt}
\end{figure}

Fig.~\ref{fig:g1_eq_g2} and Fig.~\ref{fig:g2_lt_g1} depict the
two--way rate vs. the SNR $\gamma_1$. In both figures, the $DF$
scheme is evaluated for two different values of the SNR on the
direct link, $\gamma_0=0$ and $\gamma_0=\frac{\gamma_1}{10}$.
Fig.~\ref{fig:g1_eq_g2} shows the results when the SNR of the link
$B-C$ is $\gamma_2=\gamma_1$. As expected, the upper bound
$R_{DNF}$ is always highest for all $\gamma_1$. While $R_{AF}$ is
lower than $R_{JDF}$ for low SNRs, at high SNR the noise
amplification loses significance and thus AF achieves higher
two--way rate than JDF. Also, note that the improvement of the
direct link $\gamma_0$, brings significant increase of the
two--way rate in the DF scheme. Fig.~\ref{fig:g2_lt_g1} shows the
results when $\gamma_2=\gamma_1+\gamma_1^2$, the lowest value for
$\gamma_2$ at which the rate of JDF becomes equal to teh upper
bound for DNF. Clearly, the curve for DNF remains the same as in
Fig.~\ref{fig:g1_eq_g2}, while the increased $\gamma_2$ is
reflected in improved two--way rates for AF and DF. The
improvement is larger for AF, which now slightly outperforms DF
with $\gamma_0=\frac{\gamma_1}{10}$ at higher SNRs.

\section{Conclusion}

We have investigated several methods that implement physical
network coding for two--way relay channel. We have grouped the
physical network coding schemes into two generic groups of 3--step
and 2--step schemes, respectively. The 3--step scheme is
Decode--and--Forward (DF), while we consider are three 2--step
schemes \emph{Amplify--and Forward (AF), Joint
Decode--and--Forward (JDF)} and \emph{Denoise--and--Forward
(DNF)}. We have derived the achievable rates for DF, AF, and JDF,
as well as an upper bound on the achievable rate of DNF. The
numerical results confirm that no scheme can achieve higher
two--way rate than the upper bound of DNF. Nevertheless, there are
certain SNR configurations of the source--relay links under which
the maximal two--way rate of JDF is identical with the uppper
bound of DNF. As a future work, we are first going to provide a
proof that the upper bound for DNF is achievable. Another
important aspect is investigation of the impact that the efficient
broadcasting schemes~\cite{ref:ThomasCover} can have on the DF and
JDF scheme. It is interesting to investigate how to design a
3--step scheme when the direct link is better than one of the
source--relay links. Although some practical DNF methods have been
outlined in~\cite{ref:BAT-ICC}, it is important to investigate how
to perform DNF when different modulation/coding methods are
applied. Finally, a longer--term goal is to investigate how the
physical network coding can be generalized to the scenarios with
multiple communicating nodes and multiple relays.

\bibliographystyle{IEEEtran}
\bibliography{IEEEabrv,ICC07_refs}
\end{document}